\newcommand{\BI}{\begin{itemize}}
\newcommand{\EI}{\end{itemize}}
\newcommand{\BE}{\begin{enumerate}}
\newcommand{\EE}{\end{enumerate}}
\newcommand{\BC}{\begin{center}}
\newcommand{\EC}{\end{center}}
\newcommand{\be}{\begin{equation}}
\newcommand{\bel}[1]{\begin{equation}\label{#1}}
\newcommand{\ee}{\end{equation}}
\newcommand{\CI}{\perp\!\!\!\perp}
\newtheorem{theorem}{Theorem}
\newtheorem{lemma}{Lemma}
\theoremstyle{definition}
\newtheorem{definition}{Definition}
\theoremstyle{remark}
\newtheorem{remark}{Remark}
\newtheorem*{example}{Example}
\newcommand{\fU}{\mathfrak{U}}
\newcommand{\fG}{\mathfrak{G}}
\newcommand{\fF}{\mathfrak{F}}
\newcommand{\cG}{\mathcal{G}}
\newcommand{\cC}{\mathcal{C}}
\newcommand{\cS}{\mathcal{S}}
\newcommand{\cGx}[1]{\cG^{(#1)}}
\newcommand{\cGA}[1]{\cG_A^{(#1)}}
\newcommand{\cGB}[1]{\cG_B^{(#1)}}
\begin{document}
\title{{\sc A structural Markov property for decomposable graph laws that allows control of clique intersections}}
\author{
Peter J. Green\thanks {School of Mathematics, University of
Bristol, Bristol BS8 1TW, UK.
\newline \hspace*{5mm} Email: {\tt P.J.Green@bristol.ac.uk}.}\\
\\University of Technology Sydney\\ and University of Bristol.
\and 
Alun Thomas\thanks {Division of Genetic Epidemiology, Department of Internal Medicine, University of Utah.
\newline \hspace*{5mm} Email: {\tt Alun.Thomas@utah.edu}}\\
University of Utah.
}
\date{\today}
\maketitle

\begin{abstract}
We present a new kind of structural Markov property for probabilistic laws on decomposable graphs, which allows the explicit control of interactions between cliques,
so is capable of encoding some interesting structure. We prove the equivalence of this property to an exponential family assumption, and discuss identifiability, modelling, inferential and computational implications.

\hspace{5mm}

\noindent {\small {\em Some key words:} 
conditional independence,
graphical model, 
hub model,
Markov random field,
model determination,
random graphs
}

\end{abstract}

\section{Introduction}
The conditional independence properties among components of a multivariate distribution are key to understanding its structure, and precisely describe the qualitative manner in which information flows among the variables. Further, these properties are well-represented by 
a graphical model, in which 
nodes, representing variables in the model, are connected by 
undirected edges, encoding the conditional independence properties of the distribution \citep{lauritzen-96}. Inference about the underlying graph from observed data is therefore an important task, sometimes known as structural learning.

Bayesian structural learning requires specification of a prior distribution on graphs, and there is a need for a flexible but tractable family of such priors, capable of representing a variety of prior beliefs about the conditional independence structure. 
In the interests of tractability and scalability, there has been a strong focus on the case where the true graph may be assumed to be decomposable. 

Just as this underlying graph
localises the pattern of dependence among variables, it is appealing that the prior on the graph itself should exhibit dependence locally, in the same graphical sense. Informally, the presence or absence of two edges should be independent when they are sufficiently separated by other edges in the graph. The first class of graph priors demonstrating such a structural Markov property was presented in a 2012 Cambridge University PhD thesis by Simon Byrne, and later published in \citet{byrne-dawid-15}. 

That priors with this property are also tractable arises from an equivalence demonstrated by \citet{byrne-dawid-15}, between their structural Markov property for decomposable graphs and 
the assumption that the graph law follows a clique exponential family.

This important result is yet another example of a theme 
in the literature, making a connection between systems of conditional independence statements among random variables, often encoded graphically, and factorisations of the joint probability distribution of these variables. Examples include 
the global Markov property for undirected graphs, which is necessary, and under an additional condition sufficient, for the joint distribution to factorise as a product of potentials over cliques;
the Markov property for directed acyclic graphs, which is equivalent to the existence of a factorisation of the joint distributions into child-given-parents conditional distributions;
and the existence of a factorisation into clique and separator marginal distributions for undirected decomposable graphs.

All of these results are now well-known, and for these and other essentials of graphical models, the reader is referred to \citet{lauritzen-96}.

In this note, we introduce a weaker version of this structural Markov property, and show that it is nevertheless sufficient for equivalence to a certain exponential family, and therefore to a factorisation of the graph law. This gives us a more flexible family of graph priors for use in modelling data. We show that the advantages of conjugacy, and its favourable computational implications, remain true in this broader class, and illustrate the richer structures that are generated by such priors. Efficient prior and posterior sampling from decomposable graphical models can be performed with the junction tree sampler of \citet{green-thomas-13}.

\section{The weak structural Markov property}

\subsection{Notation and terminology}

We follow the terminology for graphs and graphical models of \citet{lauritzen-96}, with a few exceptions and additions, noted here. Many of these are also used by \citet{byrne-dawid-15}. We use the term graph law for the distribution of a random graph, but do not use a different symbol, for example $\tilde{\cG}$, for a random graph. For any graph $\cG$ on a vertex set $V$, and any subset $A\subseteq V$, $\cG_A$ is the subgraph induced on vertex set $A$; its edges are those of $\cG$ joining vertices that are both in $A$. A complete subgraph is one where all pairs of vertices are joined. If $\cG_A$ is complete and maximal, in the sense that $\cG_B$ is not complete for any superset $B\supset A$, then $A$ is a clique. Here and throughout the paper, the symbols $\supset$ and $\subset$ refer to strict inclusion.
A junction tree based on a decomposable graph $\cG$ on vertex set $V$ is any graph whose vertices are the cliques of $\cG$, joined by edges in such a way that for any $A\subseteq V$, those vertices of the junction tree containing $A$ form a connected subtree. A separator is the intersection of two adjacent cliques in any junction tree. As in \citet{green-thomas-13} we adopt the convention that we allow separators to be empty, with the effect that every junction tree is connected. A covering pair is any pair $(A,B)$ of subsets of $V$ such that $A\cup B=V$; $(A,B)$ is a decomposition if $A\cap B$ is complete, and separates $A\setminus B$ and $B\setminus A$. Figure \ref{fig:decompos} illustrates the idea of a decomposition.

\begin{figure}[ht]
\begin{center}
\resizebox{0.6\textwidth}{!}{\includegraphics{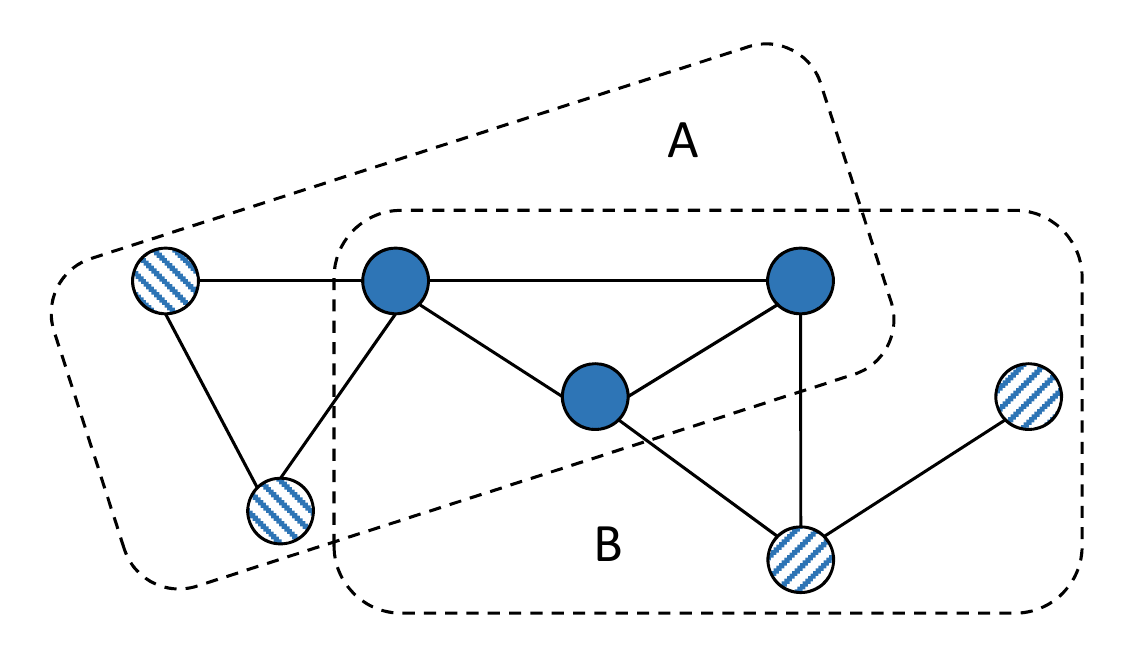}}
\end{center}
\caption{A decomposition: $A\cap B$ is complete and separates $A\setminus B$ and $B \setminus A$. \label{fig:decompos}}
\end{figure}

\subsection{Definitions}
We begin with the definition of the structural Markov property from \citet{byrne-dawid-15}. 
\begin{definition} (Structural Markov property)
A graph law $\fG(\cG)$ over the set $\fU$ of undirected decomposable graphs on $V$ is \emph{structurally Markov} if for any covering pair $(A,B)$, we have
$$
\cG_A \CI \cG_B \mid \{\cG \in \fU(A,B)\} \quad[\fG],
$$
where $\fU(A,B)$ is the set of decomposable graphs for which $(A,B)$ is a decomposition.
\end{definition}

The various conditional independence statements each restrict the graph law, so we can weaken the definition by reducing the number of such statements, for example by replacing the conditioning set by a smaller one. This motivates our definition.
\begin{definition} (Weak structural Markov property)
A graph law $\fG(\cG)$ over the set $\fU$ of undirected decomposable graphs on $V$ is \emph{weakly structurally Markov} if for any covering pair $(A,B)$, we have
$$
\cG_A \CI \cG_B \mid \{\cG \in \fU^\star(A,B)\} \quad[\fG],
$$
where $\fU^\star(A,B)$ is the set of decomposable graphs for which $(A,B)$ is a decomposition, and $A\cap B$ is a clique, that is a maximal complete subgraph, in $\cG_A$.
\end{definition}

The only difference with the structural Markov property is that we condition on the event $\fU^\star(A,B)$, not $\fU(A,B)$, so we only require independence when $A\cap B$ is a clique in $\cG_A$, that is, is maximal in $\cG_A$; it is already complete because $(A,B)$ is a decomposition. Obviously, by symmetry, $\fU^\star(A,B)$ could be defined with $A$ and $B$ interchanged without changing the meaning, but it is not the same as conditioning on the set of decomposable graphs for which $(A,B)$ is a decomposition, and $A\cap B$ is a clique in \emph{at least one of} $\cG_A$ and $\cG_B$, since in the conditional independence statement, it is $\cG$ that is random, not $(A,B)$.

The weak structural Markov property is illustrated in Figure \ref{fig:wsm}.
\begin{figure}[ht]
\begin{center}
\resizebox{\textwidth}{!}{\includegraphics{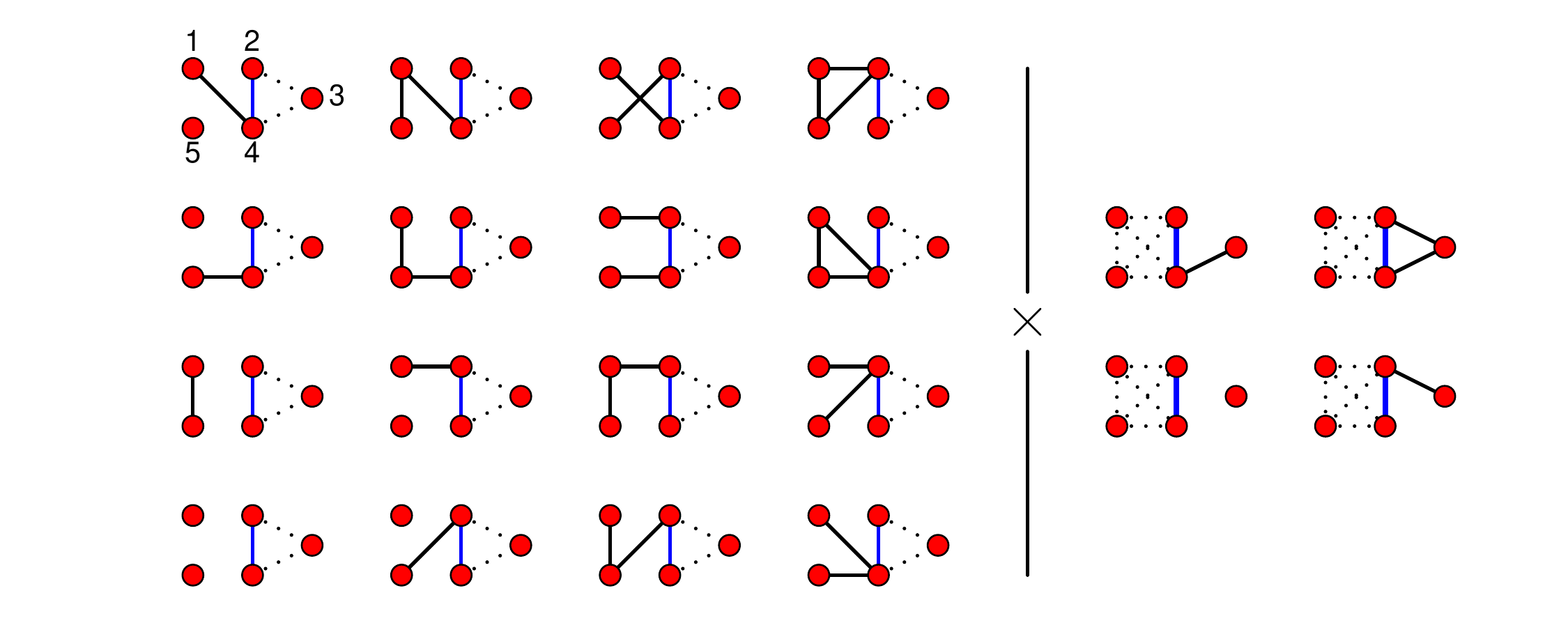}}
\end{center}
\caption{Illustrating the weak structural Markov property on a 5-vertex graph. Numbering the 5 vertices clockwise from the top left, $A$ and $B$ consist of the vertices $\{1,2,4,5\}$, and $\{2,3,4\}$ respectively. For the graph $\cG$ to lie in $\fU^\star(A,B)$, the edge $(2,4)$ must always be present, and neither $1$ and $5$ can be connected to both $2$ and $4$. The graph law must choose independently among the 16 possibilities for $\cG_A$ on the left and 
the 4 possibilities for $\cG_B$ on the right. Under the structural Markov property, there are 16 further possibilities for $\cG_A$, obtained by allowing $1$ and $5$ to be connected to both $2$ and $4$, and choice among all 32 on the left and the 4 on the right must be made independently.\label{fig:wsm}}
\end{figure}

\subsection{Clique--separator exponential family}
We now define an important family of graph laws, by an algebraic specification. This family has previously been described, though not named, by \citet{bornn-caron-11}. These authors do not examine any Markov properties of the family, but advocate it for flexible prior specification.
 
\begin{definition} (Clique--separator exponential family and clique--separator factorisation laws)
The \emph{clique--separator exponential family} is the exponential family of graph laws over $\fF \subseteq \fU$, with $(t^+,t^-)$ as natural statistic with respect to the uniform measure on $\fU$, where $t^+_A=\max(t_A,0)$ and $t^-_A=\min(t_A,0)$, and
$$
t_A(\cG)=\begin{cases}
1, & \text{ if }A\text{ is a clique in }\cG, \\
-\nu_A(\cG), &  \text{ if }A\text{ is a separator in }\cG, \\
0, & \text{otherwise},
\end{cases}
$$
where $\nu_A(\cG)$ is the multiplicity of separator $A$ in $\cG$.
That is, laws in the family have densities of the form:
$$
\pi(\cG) \propto \exp\{\omega^+ t^+(\cG)+\omega^- t^-(\cG)\}
$$
where $\omega^+=(\omega^+_A: A\subseteq V)$ and $\omega^-=(\omega^-_A: A\subseteq V)$ are real-valued set-indexed parameters, $t^+(\cG)=(t_+A(\cG): A\subseteq V)$ and $t^-(\cG)=(t_A^-(\cG): A\subseteq V)$. Here all vectors indexed by subsets of $V$ are listed in a fixed but arbitrary order, and the product of two such vectors is the scalar product. Note that $t^+_A(\cG)$ is simply 1 if $A$ is a clique in $\cG$, 0 otherwise, while $t^-_A(\cG)$ is $-\nu_A(\cG)$ if $S$ is a separator in $\cG$, again otherwise 0.
This density $\pi$ can be equivalently written as a \emph{clique--separator factorisation law}
\bel{eq:csf}
\pi_{\phi,\psi}(\cG) \propto \frac{\prod_{C \in \cC} \phi_C}{\prod_{S \in \cS}\psi_S}
\ee
where $\cC$ is the set of cliques and $\cS$ the multiset of separators of $\cG$, and
$\phi_C=\exp(\omega^+_C)$ and $\psi_S=\exp(\omega^-_S)$; this is the form we prefer to use hereafter.
\end{definition}

This definition is an immediate generalisation of that of the clique exponential family of \citet{byrne-dawid-15}, in which $t=t^++t^-$ is the natural statistic, so $\omega^+_A$ and $\omega^-_A$ coincide, as do $\phi_A$ and $\psi_A$. \citet{byrne-dawid-15} show that for any fixed vertex set, the structurally Markov laws are precisely those in a clique exponential family. In the next section we show an analogous alignment between the weak structural Markov property and clique--separator factorisation laws.
 
\subsection{Main result}
\begin{theorem} 
A graph law $\fG$ over the set $\fU$ of undirected decomposable graphs on $V$, whose support is all of $\fU$, is weakly structural Markov if and only if it is a clique--separator factorisation law.
\end{theorem}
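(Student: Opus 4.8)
The plan is to prove the two implications separately. For ``if'' (a clique--separator factorisation law is weakly structurally Markov) the key is a counting lemma relating the cliques and separators of $\cG$ to those of $\cG_A,\cG_B$ whenever $(A,B)$ is a decomposition with $S:=A\cap B$ a clique of $\cG_A$. Writing $\cC(\cdot)$ for the clique set and $\cS(\cdot)$ for the separator multiset, I would argue that because $S$ is complete and separates, every clique of $\cG$ lies in $A$ or in $B$, the only clique of $\cG_A$ contained in $S$ is $S$ itself, and a clique of $\cG_A$ can fail to be a clique of $\cG$ only if it equals $S$ and is extended on the $B$ side. Two cases arise: if $S$ is also maximal in $\cG_B$ it survives as a clique of $\cG$ counted once, while otherwise it is absorbed into the unique clique $C_B\supset S$ of $\cG_B$; a short junction-tree argument (reattaching to $C_B$ the junction-tree edges incident to $S$ on the $A$-side, and using $C_B\cap A=S$ to see their separators are unchanged) gives $\cS(\cG)=\cS(\cG_A)\uplus\cS(\cG_B)$ in both cases. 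Consequently $\prod_{C\in\cC(\cG)}\phi_C/\prod_{S'\in\cS(\cG)}\psi_{S'}$ equals the product of the corresponding ratios for $\cG_A$ and $\cG_B$ divided by the constant $\phi_S$.

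This yields the conditional independence directly: conditional on $\{\cG\in\fU^\star(A,B)\}$ the graph is reconstructed from $(\cG_A,\cG_B)$ with no edges between $A\setminus B$ and $B\setminus A$, so the conditioning event is a rectangle --- a condition on $\cG_A$ (decomposable, $S$ a clique) times a condition on $\cG_B$ (decomposable, $S$ complete) --- and on it the density factorises as a function of $\cG_A$ times a function of $\cG_B$, which is exactly $\cG_A\IND\cG_B$.

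For the converse I would exploit full support to restate the property in multiplicative form: $\cG_A\IND\cG_B\mid\{\cG\in\fU^\star(A,B)\}$ is equivalent to the cross-product identities $\fG(\cG)\,\fG(\cG')=\fG(\cG'')\,\fG(\cG''')$ over the four graphs obtained by swapping the $A$- and $B$-parts of any two members of $\fU^\star(A,B)$ (legitimate precisely because the event is a rectangle), i.e.\ to additive separability of $\log\fG$ across $(\cG_A,\cG_B)$ on every such event. The goal is to deduce that $\log\fG(\cG)=\text{const}+\sum_{C\in\cC(\cG)}\log\phi_C-\sum_{S'\in\cS(\cG)}\log\psi_{S'}$; equivalently, writing $t^+_A,t^-_A$ for the clique-indicator and separator-multiplicity statistics, that $\log\fG$ lies in the span $W$ of $\{t^+_A,t^-_A\}$ modulo constants. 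The ``if'' direction already shows that every element of $W$ satisfies all the separability constraints, so it remains to prove the reverse inclusion.

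The hard part will be precisely this reverse inclusion, and its difficulty is the very feature that separates the weak property from the strong one. The natural inductive strategy --- peel a simplicial vertex $w$ via the admissible decomposition $A=V\setminus\{w\}$, $B=N[w]$, recording a factor $\phi_C$ as each clique is completed and a factor $\psi_S^{-1}$ for each separator --- succeeds only when $N(w)$ stays maximal in $\cG_{V\setminus w}$, i.e.\ when the leaf clique at $w$ has at least two private vertices. When it has a single private vertex, the relevant separator $S$ lies strictly inside a clique on \emph{both} sides, the event leaves $\fU^\star$, and no admissible leaf peeling exists: these ``thin'' separators are exactly what the weak property controls only indirectly. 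I therefore expect to finish either by combining several $\fU^\star$-admissible moves to realise such reductions while tracking the induced $\psi_S$ factors, or, more robustly, by the linear-algebra route of showing that the full system of separability constraints has solution space exactly $W$, the dimensions matching because ``if'' already places $W$ inside it. Pinning down the separator potentials $\psi_S$, together with the order-independence of the reconstruction (underwritten by full support), is the step I anticipate requiring the most care.
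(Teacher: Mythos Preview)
Your ``if'' direction is sound and more explicit than the paper, which simply calls that direction trivial. The ``only if'' direction, however, has a genuine gap that you yourself flag: you correctly diagnose that peeling a simplicial vertex $w$ via $(A,B)=(V\setminus\{w\},N[w])$ fails when $N(w)$ is not maximal in $\cG_{V\setminus\{w\}}$, but neither of your proposed remedies is carried through. The linear-algebra route in particular does not close: knowing $W\subseteq\text{(solution space)}$ from the ``if'' direction gives a \emph{lower} bound on the solution-space dimension, not the upper bound you need, so ``the dimensions matching'' is precisely the statement still to be proved, and there is no evident independent way to count the rank of the constraint system.

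The paper's device for this obstacle is a refinement of perfect ordering it calls \emph{pluperfect}: in building the junction tree, at each step $j$ one chooses an available clique whose separator $S_j=C_j\cap\bigcup_{i<j}C_i$ is not a proper subset of the separator that any other currently available clique would produce. With this ordering, setting $B=\bigcup_{i<j}C_i$ and $A=(V\setminus B)\cup R_j$ for any $R_j$ with $S_j\subset R_j\subseteq C_{h(j)}$, one has $A\cap B=R_j$, and the paper shows $R_j$ is maximal in $\cG_A$: if some $v\in A\setminus R_j$ made $R_j\cup\{v\}$ complete, the junction-tree path from $C_{h(j)}$ to a clique containing $R_j\cup\{v\}$ would exhibit a separator strictly containing $S_j$ that was available at step $j$, contradicting pluperfectness. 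Hence every reduction step is $\fU^\star$-admissible, and a cross-over identity gives
\[
\pi(\cG)=\prod_j\pi(\cG^{(C_j)})\prod_{j\ge2}\frac{\pi(\cG^{(R_j,C_j)})}{\pi(\cG^{(R_j)})\pi(\cG^{(C_j)})}.
\]
A second short lemma then shows that $\pi(\cG^{(R_1,R_2)})/\{\pi(\cG^{(R_1)})\pi(\cG^{(R_2)})\}$ depends only on $R_1\cap R_2$, which defines $\psi_S$ unambiguously. This is the concrete mechanism your proposal is missing: the induction is over cliques in a carefully chosen order, not over vertices, and pluperfectness is exactly what guarantees $\fU^\star$-admissibility at every step --- it is the precise realisation of your vaguer suggestion of ``combining several $\fU^\star$-admissible moves''.
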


\begin{remark} 
Exactly as in \citet[Theorem 3.15]{byrne-dawid-15} it is possible to weaken the condition of full support, that is, positivity of the density $\pi$. It is enough that if $\cG$ is in the support, so is $\cG^{(C)}$ for any clique $C$ of $\cG$.
\end{remark}

Our proof makes use of a compact notation for decomposable graphs, and a kind of ordering of cliques that is more stringent than perfect ordering/enumeration.

A decomposable graph is determined by its cliques. We write $\cGx{C_1,C_2,\ldots}$ for the decomposable graph with cliques $C_1,C_2,\ldots$. Without ambiguity we can omit singleton cliques from the list. In case the vertex set $V$ of the graph is not clear from the context, we emphasise it thus: $\cGx{C_1,C_2,\ldots}_V=\cGx{C_1,C_2,\ldots}$.
In particular, $\cGx{A}$ is the graph on $V$ that is complete in $A$ and empty otherwise, and $\cGx{A,B}$ is the graph on $V$ that is complete on both $A$ and $B$ and empty otherwise.

Recall that, starting from a list of the cliques, we can place these in a perfect sequence and simultaneously construct a junction tree by
maintaining two complementary subsets: 
those cliques visited and those unvisited. We initialize
the process by placing an arbitrary clique in the visited set and all others in the
unvisited. At each successive stage, we move one unvisited clique into the visited
set choosing arbitrarily from those that are available, that is, are adjacent
to a visited clique in the junction tree; at the same time a new link is added to the junction tree.

\begin{definition}
If at each step $j$ we select an available clique, numbering it $C_j$, such that the separator
$S_j = C_j \cap \bigcup_{i<j} C_i$ is not a proper subset of any other separator
that would arise by choosing a different available clique then we call
the ordering {\em pluperfect}.
\end{definition}

Clearly, it is computationally convenient and sufficient, but not necessary, to
choose the available clique that creates one of the largest of the separators, 
a construction closely related to the maximum cardinality search of \citet{tarjan-yannakakis-1984}.
This shows that a pluperfect ordering always exists and that any clique can be
chosen as the first.  


\begin{lemma}
Let $\pi$ be the density of a weakly structurally Markov graph law on $V$, and let $\cG$ be a decomposable graph on $V$. Consider a particular pluperfect ordering $C_1, \ldots, C_J$ of the cliques of $\cG$, and a junction tree in which the links connect $C_j$ and $C_{h(j)}$ via separator $S_j$ for each $j=2,\ldots,J$, where $h(j)\leq j-1$. For each such $j$, let $R_j$ be any subset of $C_{h(j)}$ that is a proper superset of $S_j$.
Then for any choice of such $\{R_j\}$, we have
$$
\pi(\cG)
=\prod_j \pi(\cGx{C_j}) \times \prod_{j\geq 2} \frac{\pi(\cGx{R_j,C_{j}})}{\pi(\cGx{R_j})\pi(\cGx{C_j})}
$$
\end{lemma}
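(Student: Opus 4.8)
The plan is to build the graph up one clique at a time along the given ordering and to reduce the whole identity to a single one-step ratio that follows from one application of the weak structural Markov property. Write $D_j=\cGx{C_1,\ldots,C_j}$ for the graph whose cliques are the first $j$ cliques of the ordering; because a pluperfect ordering is in particular perfect, each running union $\bigcup_{i\le j}C_i$ induces a decomposable graph, so every $D_j$ is decomposable, $D_1=\cGx{C_1}$ and $D_J=\cG$. Full support makes every density below strictly positive. It then suffices to prove, for each $j\ge2$, the one-step identity
\bel{eq:onestep}
\pi(D_j)\,\pi(\cGx{R_j})=\pi(D_{j-1})\,\pi(\cGx{R_j,C_j}),
\ee
since telescoping $\pi(\cG)=\pi(D_1)\prod_{j\ge2}\pi(D_j)/\pi(D_{j-1})$ and then inserting the cancelling factors $\pi(\cGx{C_j})$, $j\ge2$, into numerator and denominator reproduces exactly the stated product.

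The engine is the elementary consequence of conditional independence: if $\cG_A\CI\cG_B$ given $\{\cG\in\fU^\star(A,B)\}$, then for any four graphs in $\fU^\star(A,B)$ whose images under $\cG\mapsto(\cG_A,\cG_B)$ are the four corners of a rectangle --- sharing $A$-parts in pairs and $B$-parts in pairs --- the products of the densities at the two diagonally opposite pairs agree. To produce such a rectangle I set $K=C_j\setminus S_j$ and take the covering pair $A=R_j\cup C_j$, $B=V\setminus K$, so that $A\cup B=V$ and, crucially, $A\cap B=R_j$ rather than $S_j$. Computing induced subgraphs, using only the running-intersection property $C_i\cap C_j\subseteq S_j\subseteq R_j$ for $i<j$ together with $C_{h(j)}\cap(R_j\cup C_j)=R_j$, gives $(D_j)_A=\cGx{R_j,C_j}$, $(D_j)_B=D_{j-1}$, $(D_{j-1})_A=\cGx{R_j}$, $(D_{j-1})_B=D_{j-1}$, $(\cGx{R_j,C_j})_A=\cGx{R_j,C_j}$, $(\cGx{R_j,C_j})_B=\cGx{R_j}$ and $(\cGx{R_j})_A=(\cGx{R_j})_B=\cGx{R_j}$. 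Thus $D_j$ and $\cGx{R_j}$ occupy opposite corners of the rectangle whose $A$-parts are $\{\cGx{R_j,C_j},\cGx{R_j}\}$ and whose $B$-parts are $\{D_{j-1},\cGx{R_j}\}$, and the conditional-independence relation across this rectangle is precisely \eqref{eq:onestep}.

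The main obstacle, and the point at which both the weakening of the property and the enlargement of $S_j$ to $R_j$ earn their keep, is checking that all four graphs genuinely lie in $\fU^\star(A,B)$. Decomposability of each is immediate. For the decomposition condition I verify that $R_j$ is complete (it is, being a subset of the clique $C_{h(j)}$) and separates $A\setminus B$ from $B\setminus A=K$: in each of the four graphs every edge leaving $K$ passes through $S_j\subseteq R_j$, so deleting $R_j$ isolates $K$. The delicate requirement is maximality, since membership in $\fU^\star(A,B)$ demands that $A\cap B=R_j$ be a clique, that is maximal complete, in $\cG_A$. Here the choice $R_j\supsetneq S_j$ is essential: splitting at $S_j$ would leave it sitting inside $C_j$ and hence non-maximal, so the property could not be invoked, whereas $R_j\setminus S_j\subseteq C_{h(j)}\setminus C_j$ is non-adjacent to every vertex of $K$, so no vertex of $K$ can extend $R_j$ to a larger complete set and $R_j$ is indeed maximal in both $\cGx{R_j,C_j}$ and $\cGx{R_j}$, the only two values taken by $\cG_A$. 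Because the weak structural Markov property is assumed at every covering pair, I am free to orient $(A,B)$ exactly as above so that the maximality falls on the $C_j$-side. A single application of the property to this pair then yields \eqref{eq:onestep} for every $j\ge2$, and telescoping along the pluperfect ordering completes the proof.
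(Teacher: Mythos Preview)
Your proof is correct, and the overall architecture---telescoping $\pi(D_J)/\pi(D_1)$ and deriving the one-step cross-over identity from a single application of the weak structural Markov property---is the same as the paper's. Where you genuinely diverge is in the choice of covering pair. The paper takes $B=\bigcup_{i<j}C_i$ and $A=(V\setminus B)\cup R_j$; with that choice $A$ contains every vertex that will appear in $C_j,C_{j+1},\ldots,C_J$, so to show that $R_j$ is maximal in $\cG_A$ one must rule out extensions by vertices coming from later cliques, and this is exactly where the pluperfect hypothesis is invoked. You instead take the much smaller $A=R_j\cup C_j$ and the complementary $B=V\setminus K$. With this pair the restriction $\cG_A$ of any of the four graphs in the rectangle can only be $\cGx{R_j,C_j}_A$ or $\cGx{R_j}_A$, and maximality of $R_j$ in these follows immediately from $R_j\supsetneq S_j$ and $R_j\cap C_j=S_j$, without ever touching the later cliques. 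The upshot is that your argument never actually uses pluperfectness---only the running-intersection property of a perfect ordering---so you have in effect proved a slightly stronger statement than the lemma claims. The paper's choice has the conceptual tidiness that the full graph $\cG$ itself sits in $\fU^\star(A,B)$, but it pays for that with the pluperfect machinery; your choice trades this for a direct and shorter maximality check.
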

\begin{proof}
Let $B=\cup_{i=1}^{j-1} C_i$. Set $A=(V\setminus B) \cup R_j$. Then $R_j\cap C_j=S_j$, $(A,B)$ is a decomposition, and $A\cap B= R_j$. This intersection $R_j$ is a clique in $\cG_A$. For, suppose for a contradiction 
that $R_j$ is not a clique in $\cG_A$, i.e., it is not maximal. Then there exists a 
vertex $v$ in $A \setminus R_j$, such that $R'=R_j \cup \{v\}$ is complete. So $R'$ is a 
subset of a clique in the original graph $\cG$. Either all the cliques containing 
$R'$ are among $\{C_i, i<j\}$, so that $v$ is not in $A$, a contradiction, or one of 
them, say $C_\star$, is among $\{C_i,i\geq j\}$, in which case there is a path in the junction tree between $C_{h(j)}$ and $C_\star$, with every clique along the path containing $R_j$; so there must be a separator that
is a superset of $R_j$ (so a strict superset of $S_j$),
connects to $C_{h(j)}$, and
is among $\{S_{j+1}, S_{j+2}, ..., S_J\}$. This
contradicts the assumption that the ordering is pluperfect.
\par\indent
This choice of $(A,B)$ forms a covering pair and $\cG \in \fU^\star(A,B)$, so under WSM, we know that $\cG_A$ and $\cG_B$ are independent under $\pi_{A,B}$, their joint distribution given that $A\cap B$ is complete in $\cGx{C_1,C_2,\ldots,C_j}$. Thus we have the cross-over identity
$$
\pi_{A,B}(\cGA{R_j},\cGB{R_j})
\times
\pi_{A,B}(\cGA{R_j,C_j},\cGB{C_1,\ldots,C_{j-1}})
=
\pi_{A,B}(\cGA{R_j},\cGB{C_1,\ldots,C_{j-1}})
\times
\pi_{A,B}(\cGA{R_j,C_j},\cGB{R_j})
$$
or equivalently,
$$
\pi(\cGx{R_j})\pi(\cGx{C_1,\ldots,C_j})=\pi(\cGx{C_1,\ldots,C_{j-1}})\pi(\cGx{R_j,C_{j}}).
$$

We can therefore write
\begin{align*}
\pi(\cG) &= \pi(\cGx{C_1}) \prod_{j\geq 2} \frac{\pi(\cGx{C_1,\ldots,C_{j}})}{\pi(\cGx{C_1,\ldots,C_{j-1}})}=
\pi(\cGx{C_1}) \prod_{j\geq 2} \frac{\pi(\cGx{R_j,C_{j}})}{\pi(\cGx{R_j})} \\
& =\prod_j \pi(\cGx{C_j}) \times \prod_{j\geq 2} \frac{\pi(\cGx{R_j,C_{j}})}{\pi(\cGx{R_j})\pi(\cGx{C_j})}
\end{align*}
\mbox{}
\end{proof}

\begin{lemma}
Let $\pi$ be the density of a weakly structurally Markov graph law on $V$, and let $S$ be any subset of the vertices $V$ with $|S|\leq n-2$.  Then $\pi(\cGx{R_1,R_2})/\{\pi(\cGx{R_1})\pi(\cGx{R_2})\}$ depends only on $S$, for all sets of vertices $R_1,R_2$ for which $R_1\cup R_2\subseteq V$, $R_1\cap R_2=S$, and where both $R_1$ and $R_2$ are strict supersets of $S$.
\end{lemma}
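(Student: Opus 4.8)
The plan is to collapse every admissible configuration down to an extremal two-clique graph by means of Lemma 1, and then to show that the resulting interaction term is unchanged under swapping the ``extra'' vertices. Throughout write $q(R_1,R_2)=\pi(\cGx{R_1,R_2})/\{\pi(\cGx{R_1})\pi(\cGx{R_2})\}$ for the ratio in question; it is positive and well defined under the full-support hypothesis of the theorem, and symmetric, $q(R_1,R_2)=q(R_2,R_1)$, since $\cGx{R_1,R_2}=\cGx{R_2,R_1}$. The goal is to show that $q(R_1,R_2)$ depends on its arguments only through $S=R_1\cap R_2$.

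First I would apply Lemma 1 to the graph $\cGx{R_1,R_2}$, whose only non-singleton cliques are $R_1$ and $R_2$, joined in the junction tree by the separator $S$. I take the ordering $C_1=R_1$, $C_2=R_2$, followed by the isolated vertices of $V\setminus(R_1\cup R_2)$ as singleton cliques; this ordering is pluperfect because at the second step the separator $S$ attaching $R_2$ is never a proper subset of the empty separators available from the singletons. Because Lemma 1 allows the proper superset of $S$ inside $R_1$ that it attaches to $C_2$ to be an arbitrary $T$ with $S\subset T\subseteq R_1$, and because each isolated vertex contributes a clique factor $\pi(\cGx{\{z\}})$ and a reciprocal ratio factor that cancel, the identity of Lemma 1 reduces to $\pi(\cGx{R_1,R_2})=\pi(\cGx{R_1})\pi(\cGx{R_2})\,q(T,R_2)$. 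Dividing through gives the shrinking identity $q(R_1,R_2)=q(T,R_2)$ for every $T$ with $S\subset T\subseteq R_1$ (note $T\cap R_2=S$, so this is again a configuration with the same $S$), and, by symmetry, $q(R_1,R_2)=q(R_1,T')$ for every $T'$ with $S\subset T'\subseteq R_2$.

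Applying the shrinking identity once in each argument, with $T=S\cup\{w\}$ for some $w\in R_1\setminus S$ and $T'=S\cup\{w'\}$ for some $w'\in R_2\setminus S$, collapses any configuration to a minimal one: $q(R_1,R_2)=q(S\cup\{w\},S\cup\{w'\})$. It therefore suffices to prove that $q(S\cup\{w\},S\cup\{w'\})$ is one and the same number for all distinct $w,w'\in V\setminus S$. For this I would use a swap move, which is itself just the shrinking identity applied to $\cGx{S\cup\{a,b\},S\cup\{c\}}$ for distinct $a,b,c\in V\setminus S$: taking in turn $T=S\cup\{a\}$ and $T=S\cup\{b\}$ gives $q(S\cup\{a\},S\cup\{c\})=q(S\cup\{a,b\},S\cup\{c\})=q(S\cup\{b\},S\cup\{c\})$, so one extra vertex may be changed while the other is held fixed and distinct; symmetry lets me do the same in the other argument.

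The remaining step is purely combinatorial. The hypothesis $|S|\le n-2$ guarantees $|V\setminus S|\ge2$, so minimal configurations exist; if $|V\setminus S|=2$ there is a single unordered pair $\{w,w'\}$ and nothing further to prove. If $|V\setminus S|\ge3$, any two distinct pairs of vertices in $V\setminus S$ are joined by swap moves --- two pairs sharing a vertex differ by a single swap, and two disjoint pairs are bridged by an intermediate pair sharing a vertex with each --- so $q(S\cup\{w\},S\cup\{w'\})$ is constant across all admissible pairs and hence determined by $S$ alone. I expect the only real obstacle to be the bookkeeping in the first step: confirming that the chosen clique ordering is genuinely pluperfect and that the isolated-vertex factors cancel exactly. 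Once the shrinking identity is secured, the swap and connectivity arguments are routine.
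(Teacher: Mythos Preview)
Your proof is correct and follows essentially the same route as the paper's: apply Lemma~1 to the two-clique graph $\cGx{R_1,R_2}$ to obtain the shrinking identity $q(R_1,R_2)=q(T,R_2)$ for any $S\subset T\subseteq R_1$, then use this (with symmetry) to connect any admissible pair to any other. You are somewhat more explicit than the paper about verifying pluperfectness, about the cancellation of the singleton-clique factors, and about the combinatorics of the swap/connectivity step, but the underlying argument is the same.
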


\begin{proof}
$\cGx{R_1,R_2}$ is a decomposable graph whose unique junction tree has cliques $R_1$ and $R_2$, and separator $S$. Applying Lemma 1 to this graph, we have 
$$
\pi(\cGx{R_1,R_2})=\pi(\cGx{R_1}) \pi(\cGx{R_2}) \times \frac{\pi(\cGx{R,R_2})}{\pi(\cGx{R})\pi(\cGx{R_2})},
$$
that is,

$$ 
\frac{\pi(\cGx{R_1,R_2})}{\pi(\cGx{R_1}) \pi(\cGx{R_2})}=\frac{\pi(\cGx{R,R_2})}{\pi(\cGx{R})\pi(\cGx{R_2})},
$$
for any $R$ with $S\subset R\subseteq R_1$. This means that any vertices may be added to or removed from $R_1$, or by symmetry to or from $R_2$, without changing the value of 
$\pi(\cGx{R_1,R_2})/\{\pi(\cGx{R_1}) \pi(\cGx{R_2})\}$,
 providing it remains true that $R_1\cup R_2\subseteq V$, $R_1\cap R_2=S$, $R_1\supset S$ and $R_2\supset S$.

But any unordered pair of subsets $R_1$, $R_2$ of $V$ with $R_1\cup R_2\subseteq V$, $R_1\cap R_2=S$, $R_1\supset S$ and $R_2\supset S$ can be transformed stepwise to any other such pair by successively adding or removing vertices to or from one or other of the subsets. Thus $\pi(\cGx{R_1,R_2})/\{\pi(\cGx{R_1})\pi(\cGx{R_2})\}$ can depend only on $S$: we will denote it by $1/\psi_S$.
\end{proof}
\begin{proof}[of Theorem 1]
Suppose that $\pi$ is the density of a weakly structurally Markov graph law on $V$. For each $A\subseteq V$, let $\phi_A=\pi(\cGx{A})$. Then by Lemmas 1 and 2,
$$ 
\pi(\cG) =\frac{\prod_j \phi_{C_j}}{\prod_{j \geq 2} \psi_{S_j}}.
$$
Since $\cGx{\emptyset}$, $\cGx{\{v\}}$, $\cGx{\{w\}}$, $\cGx{\{v,w\}}$, for distinct vertices $v,w\in V$ all denote the same graph, we must have $\phi_{\{v\}}=\pi(\cGx{\emptyset})$ for all $v$, and also $\psi_{\emptyset}=\pi(\cGx{\emptyset})$. Under these conditions, the constant of proportionality in (\ref{eq:csf}) is evidently 1. 

Conversely, it is trivial to show that if the clique--separator factorisation property (\ref{eq:csf}) applies to $\pi$, then $\pi$ is the density of a weakly structurally Markov graph law.
\end{proof}

\subsection{Identifiability of parameters}
\label{sec:ident}
\citet[Proposition 3.14]{byrne-dawid-15} point out that their $\{t_A(\cG)\}$ values are subject to $|V|+1$ linear constraints, $\sum_{A\subseteq V} t_A(\cG)=1$, $\sum_{A\ni v} t_A(\cG)=1$ for all $v\in V$, so that their parameters $\omega_A$, or equivalently $\phi_A$, are not all identifiable. They obtain identifiability, by proposing a standardised vector $\omega^\star$, with $|V|+1$ necessarily 0 entries, that is a linear transform of $\omega$. By the same token, the $|V|+1$ constraints on $t_A(\cG)$ are linear constraints on $t^+_A(\cG)$ and $t^-_A(\cG)$, and so $\{\phi_A\}$ and $\{\psi_A\}$ are not all identifiable. We could obtain identifiable parameters by for example choosing $\psi_{\emptyset}=1$ and $\phi_{\{v\}}=1$ for all $v \in V$, or, as above, by setting $\psi_{\emptyset}=\pi(\cGx{\emptyset})$ and $\phi_{\{v\}}=\pi(\cGx{\emptyset})$ for all $v$, or in other ways. 

Note in addition that $\emptyset$ cannot be a clique, and neither $A=V$ nor any subset $A$ of $V$ with $|A|=|V|-1$ can be a separator, so the corresponding $\phi_A$ and $\psi_A$ are never used. The dimension of the space of clique--separator factorisation laws is therefore $2 \times 2^{|V|}-2|V|-3$, nearly twice that of clique exponential family laws, $2^{|V|}-|V|-1$.

For example, when $|V|=3$, all graphs are decomposable, and all graph laws are clique--separator factorisation laws, while clique exponential family laws have dimension 4; when $|V|=4$, 61 out of 64 graphs are decomposable, and the dimensions of the two spaces of laws are 21 and 11; when $|V|=7$, only 617675 out of the $2^{21}$ graphs are decomposable, and the dimensions are 239 and 120.

\begin{figure}[ht]
\begin{center}
\resizebox{\textwidth}{!}{\includegraphics{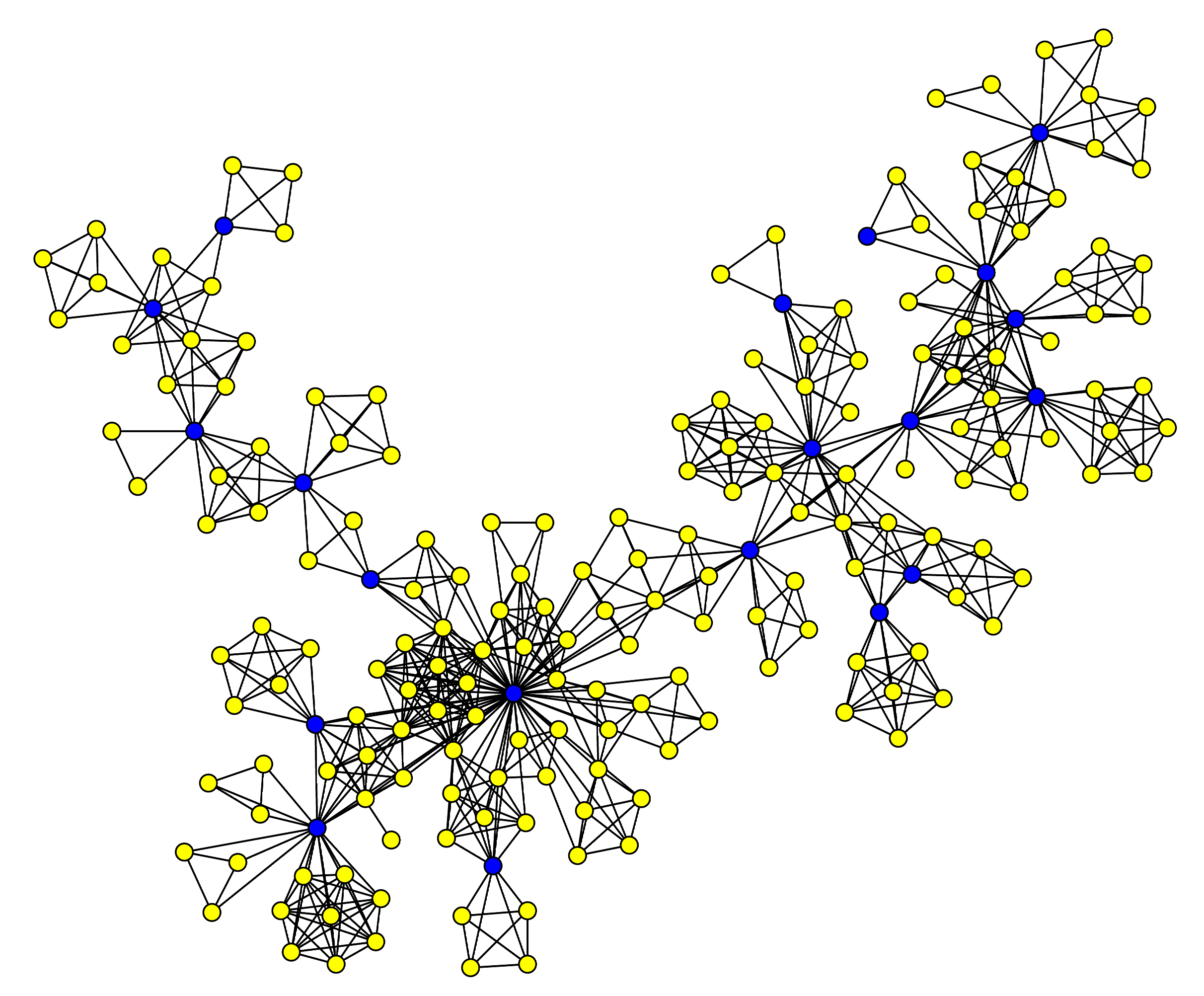}}
\end{center}
\caption{Simulated graph from a clique--separator factorisation model, with $\phi_C=\exp(-4|C|)$; $\psi_S=\exp(-0.5|S|)$ if $S$ contains a hub (dark colour), otherwise $\infty$. There are 200 vertices including 20 hubs.\label{fig:universal}}
\end{figure}

\section{Some implications for modelling and inference}

\subsection{Conjugacy and posterior updating}
As priors for the 
graph underlying
a model $P(X|\cG)$ for data $X$, clique--separator factorisation laws are conjugate for decomposable likelihoods, in the case where there are no unknown parameters in the distribution: given $X$ from the model
$$
p(X|\cG) = \frac{\prod_{C\in \cC} \lambda_C(X_C)}{\prod_{S\in \cS} \lambda_S(X_S)}
=\prod_{A\subseteq V} \lambda_A(X_A)^{t_A(\cG)}
$$
where $\lambda_A(X_A)$ denotes the marginal distribution of $X_A$,
the posterior for $\cG$ is
$$
p(\cG|X) \propto \frac{\prod_{C\in \cC} \phi_C\lambda_C(X_C)}{\prod_{S\in \cS} \psi_S\lambda_S(X_S)}
$$
that is, a clique--separator factorisation law with parameters $\{\phi_A\lambda_A(x_A)\}$ and $\{\psi_A\lambda_A(x_A)\}$.

More generally, when there are parameters in the graph-specific likelihoods, the notions of compatibility and hyper-compatibility \citep{byrne-dawid-15} allow the extension of the idea of structural Markovianity to the joint Markovianity of the graph and the parameters, and gives the form of the corresponding posterior.

\subsection{Computational implications}
Computing posterior distributions of graphs on a large scale remains problematic, with Markov chain Monte Carlo methods seemingly the only option except for toy problems, and these methods having notoriously poor mixing. However, the junction tree sampler of \citet{green-thomas-13} seems to give acceptable performance for moderate-sized problems of up to a few hundred variables. Posteriors induced by clique--separator factorisation law priors are ideal material for these samplers, which explicitly use a clique--separator representation of all graphs and distributions. 

In \citet{bornn-caron-11}, a different Markov chain Monte Carlo sampler for clique--separator factorisation laws is introduced. We have evidence that the examples shown in their figures are not representative samples from the particular models claimed, due to poor mixing.

\subsection{Modelling}

Here we briefly discuss
the way in which choice of particular forms for the parameters $\phi_A$ and $\psi_A$ govern the qualitative and even quantitative aspects of the graph law. These choices are important in designing a graph law for a particular purpose, whether or not this is prior modelling in Bayesian structural learning.

A limitation of clique exponential family models is that because large clique potentials
count in favour of a graph, and large separator potentials count against,
it is difficult for these laws to encourage the same features
in both cliques and separators. For instance, if we choose clique potentials
to favour large cliques, we seem to be forced to favour small separators.

A popular choice for a graph prior in past work on Bayesian structural learning is the well-known Erd\H{o}s--R\'enyi random graph model, in which each of the $|V|(|V|-1)/2$ possible edges on the vertex set $V$ is present independently, with probability $p$. This model is amenable to theoretical study, but realisations of this model typically exhibit no discernible structure. When restricted to decomposable graphs, the Erd\H{o}s--R\'enyi model is a 
rather extreme example of a clique exponential family law, arising by taking $\phi_A=(p/(1-p))^{|A|(|A|-1)/2}$. Again realisations appear unstructured, essentially because of the quadratic dependence on clique or separator size in the exponent of the potentials $\phi_A$.

For a concrete example of a model with much more structure, suppose that our decomposable graph represents a communication
network. There are two types of vertices, hubs and
non-hubs. Adjacent vertices can all communicate with each other,
but only hubs will relay messages. So, for a non-hub to communicate
with a non-adjacent non-hub, there must be a path in the graph from
one to the other where all intermediate nodes are hubs. This example has the interesting feature that using only local properties, it enforces
a global property, universal communication. 
A necessary and sufficient condition for universal communication
is that every separator contains a hub. This implies that either the graph
is a single clique, or every clique must also contain a hub.
To model this with a clique--separator factorisation law, we can set the separator potential to
be $\psi_S= \infty$ if $S$ does not contain a hub. We are free to set the
remaining values of $\psi_S$, and the values of
the clique potentials $\phi_C$ for all cliques $C$, as we wish. In this example,
these parameters are chosen to control the sizes of cliques and separators; specifically, $\phi_C=\exp(-4|C|)$ and $\psi_S=\exp(-0.5|S|)$ when $S$ contains a hub, which discourages both
large cliques, and separators containing only hubs. The graph probability $\pi(\cG)$ will be
zero for all decomposable graphs that fail to allow universal communication,
and otherwise will follow the distribution implied by the potentials. Note that this example
requires the slight generalisation of the Theorem 1 mentioned in the Remark following it.
Figure \ref{fig:universal} shows a sample from this model, generated using a junction tree sampler.

\subsection{Significance for statistical analysis}

This short paper is not the place for a comprehensive investigation of the practical implications of adopting prior models from the clique--separator factorisation family in statistical analysis, something we intend to explore in later work. Instead, we extend the discussion of the example of the previous section to draw some lessons about inference. First we make the simple but important observation that the support of the posterior distribution of the graph cannot be greater than that of the prior. So, in the example of the hub model, the posterior will be concentrated on decomposable graphs where every separator contains a hub, and realisations will have some of the character of Figure \ref{fig:universal}.

There has been considerable interest recently in learning graphical models using methods that implicitly or explicitly favour hubs, defined in various ways with some affinity to our use of the term; see for example, 
\citet{mohan2014node},
\citet{tan2014learning} and
\citet{zhang2017incorporating}. These are often motivated by genetic applications in which hubs may be believed to correspond to genes of special significance in gene regulation. These methods usually assume that the labelling of nodes as hubs is unknown, but it is straightforward to extend our hub model to put a probability model on this labelling, and to augment the Monte Carlo posterior sampler with a move that reallocates the hub labels, using any process that maintains the presence of at least one hub in every separator. This is a strong hint of the possibility of a fully Bayesian procedure that learns graphical models with hubs.

\subsection{The cost of assuming the graph is decomposable when it is not}

The assumption of a decomposable graph law as prior on Bayesian structural learning is of course a profound restriction. There is no reason why nature should have been kind enough to generate data from 
graphical models that 
are decomposable. However the computational advantages of such an assumption are tremendous; see the experiments and thorough review in \citet{jones-carvalho-etal-05}. The position has not changed much since this paper was written, so far as computation of exact posteriors is concerned.

However, an optimistic perspective on this conflict between prior reasonableness and computation tractability can be justified by work of \citet{fitch-jones-etal-14}. For the zero-mean Gaussian case, with a hyper-inverse-Wishart prior on the concentration matrix, they conclude that asymptotically
the posterior will converge to graphical structures that are minimal
triangulations of the true graph,
the marginal log likelihood ratio comparing different minimal
triangulations is stochastically bounded and appears to remain data
dependent regardless of the sample size, and
the covariance matrices corresponding to the different minimal
triangulations are essentially equivalent, so model averaging is of minimal
benefit.
Informally, restriction to decomposable graphs doesn't matter really, with the right parameter priors; we can still fit essentially the right model, though perhaps inference on the graph itself should not be over-interpreted.

\section{An even weaker structural Markov property}
It is tempting to wonder if clique--separator factorisation is equivalent to a simpler definition of weak structural Markovianity, one that places yet fewer conditional independence constraints on $\fG$; the existence of the Theorem makes this possibility implausible, but it remains conceivably possible that a smaller collection of conditional independences could be equivalent. The following counter-example rules out the possibility of requiring only that 
\bel{eq:ewsm}
\cG_A \CI \cG_B \mid \{\cG \in \fU^{+}(A,B)\} \quad[\fG],
\ee
where $\fU^{+}(A,B)$ is the set of decomposable graphs for which $(A,B)$ is a decomposition, and $A\cap B$ is a clique in $\cG$.

\begin{example}
Consider graphs on vertices $\{1,2,3,4\}$. The only non-trivial conditional independence statements implied by property (\ref{eq:ewsm}) arise from decompositions $(A,B)$ where both $A$ and $B$ have three vertices (and $A\cap B$ has two). Suppose $A=\{1,2,3\}=123$, for short, and $B=234$. Given that $23$ is a clique in $\cG$, $\cG_A$ may be $\cGx{23}_A$, $\cGx{12,23}_A$ or $\cGx{13, 23}_A$, 
and similarly 
$\cG_B$ may be $\cGx{23}_B$, $\cGx{23,24}_B$ or $\cGx{23,34}_B$. 
These two choices are independent, by (\ref{eq:ewsm}), and this imposes 4 equality constraints on the graph law. There are $6$ 
different choices for the two-vertex clique $A\cap B$, so not more than 24 constraints overall (they may not all be independent). There are 61 decomposable graphs on four vertices, so the set of graph laws satisfying (\ref{eq:ewsm}) has dimension at least $60-24=36$. But as we saw in section \ref{sec:ident}, the set of clique--separator factorisation laws has dimension $2 \times 2^{|V|}-2|V|-3 = 21$.
\end{example}

Essentially, assumption (\ref{eq:ewsm}) does not constrain the graph law sufficiently to obtain the explicit clique--separator factorisation. In fact, it is easy to show that (\ref{eq:ewsm}) places no constraints on $\pi(\cG)$ for any connected $\cG$ consisting of one or two cliques.

\section*{Acknowledgements}

We thank the editors and referees for their constructive criticisms of our presentation, and we are also grateful to
Luke Bornn, Simon Byrne, Fran\c{c}ois Caron, Phil Dawid and Steffen Lauritzen for stimulating discussions and correspondence.
Alun Thomas was supported in part by funding from the
National Center for Research Resources and the National Center for Advancing
Translational Sciences, National Institutes of Health,
through Grant 5UL1TR001067-05 (formerly 8UL1TR000105 and UL1RR025764). 


\end{document}